\documentclass[letterpaper, 10 pt, conference]{ieeeconf}  
\IEEEoverridecommandlockouts                              

  \let\labelindent\relax
\usepackage{amsmath} 
\usepackage{amssymb}  
\usepackage{mathrsfs} 
\usepackage{tabulary}
\usepackage[english]{babel}
\usepackage{blindtext}
\usepackage[font=footnotesize]{caption}
\usepackage[makeroom]{cancel}
\usepackage{amsfonts}
\usepackage{subfig}
\usepackage{amsthm}  
\usepackage{amssymb,hyperref}
\usepackage{graphicx}
\usepackage{dsfont}
\usepackage{algorithm2e}
\RestyleAlgo{ruled}
\usepackage{algpseudocode}
\usepackage{tabularx}
\usepackage{float}
\usepackage{cite}
\usepackage{enumitem} 
\usepackage{bbm} 
\usepackage[table]{xcolor}
\usepackage[strict]{changepage}

\newcommand{\vect}[1]{\boldsymbol{#1}}
\newcommand{\mat}[1]{\boldsymbol{#1}}

\DeclareMathOperator*{\argmax}{\rm{argmax}} 
\renewcommand{\eqref}[1]{Eq.~(\ref{#1})}  
\newtheorem{remark}{Remark}
\newtheorem{theorem}{Theorem}
\newtheorem{lemma}{Lemma}
\newtheorem{assumption}{Assumption}

\newtheorem{definition}{Definition}

\newtheorem{proposition}{Proposition}

\newtheorem{problem}{Problem}

\allowdisplaybreaks

\title{\LARGE \bf
On a Closed-Loop  Controller for the Coevolutionary Model\\of Actions and Opinions via Broadcasting Information
}

\author{Roberta Raineri, Mengbin Ye, and Lorenzo Zino
\thanks{R. Raineri and L. Zino are with the Department of Electronics and Telecommunications, Politecnico di Torino, Turin, Italy (\texttt{\{roberta.raineri,lorenzo.zino\}@polito.it}). R. Raineri is with Modelway, Turin, Italy. 
M. Ye is with the Adelaide Data Science Centre, Adelaide University, Adelaide, Australia (\texttt{ben.ye@adelaide.edu.au}). M.~Ye is supported by the Australian Government via the Australian Research Council (DE250100199). The work of L. Zino is supported by Fondazione CRT and Politecnico di Torino via the “Bando Internazionalizzazione della Ricerca 2025” (CUP: E13C25002890005).
}%
}

\begin{document}

\maketitle
\thispagestyle{empty}

\begin{abstract}
We deal with controlling a complex social network in which agents have actions and opinions that coevolve, mutually influencing one another. We consider an input consisting in broadcasting information to a target set of agents with the objective of steering the population, initially at a consensus, to a different consensus. For a constant input, we derive a monotone convergence result, building on which we design an algorithm that determines whether a target set is sufficient to achieve the  objective and an effective heuristic to optimize the target set. Then, we introduce a feedback control law that, using information on the state of the system, dynamically revises the target set,  reducing the  effort needed to achieve the objective while guaranteeing convergence to the desired consensus state. 
\end{abstract}

\section{Introduction}\label{sec:intro}

Many real-world social phenomena involve a population of individuals who make decisions on (often binary) actions to adopt (e.g., use public transportation or private cars) on the basis of several factors, including the individual's opinion (i.e., the support or preference for a given action) and social pressure, leading to potential misalignment between an individual's action and their opinion~\cite{centola2005emperor}. Classical mathematical models of social dynamics focus separately on decision-making and opinion formation~\cite{friedkin2015_socialsurvey,proskurnikov2017tutorial,montanari2010spread_innovation}), overlooking the complexity that arises due to their coupling~\cite{gavrilets2017collective}. 

Continuous-opinion discrete-action (CODA) models constituted a seminal step to address this limitation. In the CODA framework~\cite{Martins2008coda,Ceragioli2018quantized}, the process is driven by the opinion dynamics, whilst actions are a direct quantization of an individual's opinion. This limits the possibility to capture a misalignment between actions and opinions. To overcome this limitation, in~\cite{zino2020chaos} a coevolutionary model of actions and opinions was proposed in which individuals simultaneously update their actions and opinions, each a separate independent state. The model utilizes a game-theoretic formulation that captures social influence and peer pressure, and also allows for the analytical study of the model; see~\cite{Zino2020cdc,Hassan2023tac}.

Building on the coevolutionary model of actions and opinions~\cite{Hassan2023tac}, one can study how to guide the collective emergent behavior using approaches that exploit the complex coupling of opinions and actions, including the possibility (as in this paper) of influencing just the opinions to steer actions and opinions. A first effort focused on the problem of steering the whole population from one consensus state to another by directly controlling the state (action and/or opinion) of a subset of nodes~\cite{raineri2025_tcns}. For this scenario, an algorithm to select the minimal subset of nodes to control was proposed in~\cite{raineri2025_tcns} and shown to be effective, but it has two main limitations. First, it requires direct control of some agents, which is often infeasible or unethical. Second, it assumes that the selected agents remain permanently controlled, leading to successful interventions but potentially unnecessary long-term costs. 

This work is specifically aimed at overcoming these two limitations. We incorporate an external control input into the coevolutionary model of actions and opinion from~\cite{Hassan2023tac}, representing information that is broadcast to select individuals of the population, e.g., by a media source~\cite{Li2020} or a recommender agent~\cite{Rossi2022,Sprenger2024}. In other words, we can directly control the content that the input shares and to whom. Hence, our  goal is to understand which nodes to target with this information. To this aim, we develop an algorithm to determine whether a proposed target set is sufficient to guarantee the desired outcome, and an effective heuristic to determine the minimal number of nodes that needs to be targeted, since the problem is computationally intractable due to its combinatorial nature. Then, we further reduce the control effort by addressing the second limitation described above, i.e., designing a dynamic control policy that reduces the number of targeted nodes over time. To this aim, we design our control policy in a closed-loop fashion, using the algorithm described earlier to initialize the control input, and then leveraging state information to adaptively reduce it.

This work has three main contributions. First, we encapsulate in the coevolutionary model~\cite{Hassan2023tac} an external control input. Second, for a constant control input, we prove convergence, we propose an algorithm to determine whether a constant control input is able to achieve the desired goal, and a  heuristic to determine the optimal input. Third, we design a closed-loop control scheme that guarantees convergence and dynamically revises the target set to reduce the control effort.

\section{Model and Problem Statement}\label{sec:model}

\subsection{Coevolutionary Model with an External Input}
\label{sec:dynamics}

\textit{Players.} We consider a population  $\mathcal V = \{1,2,\hdots,n\}$ of $n$ individuals (players). Each $i\in\mathcal V$ is associated with a two-dimensional state variable $\vect{z_i}(t) = (x_i(t),y_i(t))\in\{-1,+1\}\times[-1,+1]$ that is updated at discrete time $t$. The first entry, $x_i(t)\in\{-1,+1\}$, represents the (binary) \emph{action} of  $i$ at time $t$; the second entry, $y_i(t)\in[-1,+1]$, their \emph{opinion} on the action. In other words, $y_i(t)=-1$ means that $i$ is totally in favor of action $-1$;  $y_i(t)=+1$ that $i$ fully supports action $+1$. Actions and opinions of the individuals are gathered in vectors $\vect{x}(t)\in\{-1,1\}^n$ and  $\vect{y}(t)\in[-1,1]^n$, and the state of the system is compactly denoted by the joint $2n$-dimensional state vector $\vect{z}(t):=(\vect{x}(t),\vect{y}(t))\in\{-1,1\}^n\times [-1,1]^n$. Consistent with the game-theoretic formulation in~\cite{Hassan2023tac}, given an individual $i\in\mathcal V$, we define $\vect{z_{-i}}:=(\vect{x_{-i}},\vect{y_{-i}})\in\{-1,1\}^{n-1}\times [-1,1]^{n-1}$ as the $(2n-2)$-dimensional vector with the state of all others. 

Individuals interact on a two-layer, strongly connected, weighted social network $\mathcal G=(\mathcal V,\mathcal E_A, \mat A,\mathcal E_W, \mat W)$. First, they observe others' actions on an influence layer, where $(i,j)\in\mathcal E_A$ means that $i$ can observe $j$'s action and the nonnegative weight $a_{ij}$ captures the corresponding influence ($a_{ij}>0\iff (i,j)\in\mathcal E_A$). Second, they share their opinions on a communication layer, where $(i,j)\in\mathcal E_W$ means that $i$ receives information from $j$'s opinion, captured by the nonnegative weight $w_{ij}$ ($w_{ij}>0\iff (i,j)\in\mathcal E_W$). The two weight matrices are stochastic, i.e.,   $\mat A\vect1=\mat W\vect{1}=\vect{1}$, where $\vect1$ is the $n$-dimensional all-$1$ vector.

We introduce an external input, modeled by an external agent that can represent different  entities (e.g., a media source or a social bot). This external agent  broadcasts information $\iota(t)\in[-1,1]$ to a subset of nodes. Specifically, we introduce a vector $\vect{\gamma}(t) \in [0,1]^n$, where the entry $\gamma_i(t)$ is the effort that the controller places in targeting agent $i$ at time $t$ with the information $\iota(t)$ from the external agent. We denote by 
    $\mathcal C(t):=\left\{i:\gamma_i(t)>0\right\}$ 
the \emph{target set} of agents receiving the broadcast at time $t$. Given the limited human ability to process information, we assume that if agent $i$ receives $\gamma_i(t)$ information from the external input, then the information received from peers is reduced by the same factor. In other words, the external input replaces part of the information shared on the communication layer with the information broadcast by the external agent. 


\textit{Utility function.} 
A function $u_i(\vect{z_i},\vect{z_{-i}})$ was proposed in \cite{Hassan2023tac} to represent the utility that individual $i$ receives for selecting an action and opinion pair $\vect{z_i}=(x_i,y_i)$ when the state of the others is $\vect{z_{-i}}$. We re-write it 
to explicitly include  the information shared by the external input:
\begin{align}
    &u_i(\vect{z_i},\vect{z_{-i}})=\tfrac{\lambda_i (1-\beta_i)}{2}\sum\nolimits_{j \in \mathcal V} a_{ij} \big[ (1-x_j)(1-x_i) \nonumber\\&+ (1+x_j)(1+x_i) \big]\hspace{-.1cm} -\hspace{-.1cm}\beta_i (1-\lambda_i)(1\hspace{-.05cm}-\hspace{-.05cm}\gamma_i)\displaystyle \sum_{j \in \mathcal V}{w}_{ij}(y_i-y_j)^2\nonumber\\&-\beta_i (1-\lambda_i)\gamma_i(y_i-\iota)^2-\lambda_i\beta_i(x_i-y_i)^2,\label{eq:utility}
\end{align}
where $\lambda_i ,\beta_i\in(0,1]$ capture the weights given to actions observed and opinions exchanged, respectively. For the sake of readability, we have omitted to explicitly show the time-dependency of $\gamma_i(t)$ and $\iota(t)$. In \eqref{eq:utility}, the first term accounts for the individuals' tendency to coordinate actions, typical of coordination games~\cite{montanari2010spread_innovation,ye2021nat}. The second and third terms capture the influence of others' opinions and of the external source, respectively,  typical of opinion dynamics~\cite{friedkin1990_FJsocialmodel,proskurnikov2017tutorial}. The fourth term describes an individual's desire for consistency between their action and opinion. 

\textit{Update Process.} We define the set $\mathcal R(t)\subseteq \mathcal V$ as those individuals that simultaneously become active at time $t$. Each individual $i\in\mathcal R(t)$ updates their action and opinion, aiming to maximize their utility via a joint best-response with respect to the function in \eqref{eq:utility}. This yields:
\begin{equation}\label{eq:update}x_i(t+1),y_i(t+1)\hspace{-.1cm}=\hspace{-.1cm}\left\{\hspace{-.1cm}\begin{array}{ll}\displaystyle\argmax{u_i(\vect{z_i}(t),\vect{z_{-i}}(t))} &i\in\mathcal R(t),\\
x_i(t),y_i(t)&i\notin \mathcal R(t),\end{array}\right.
\end{equation}
with the standard tie-breaker convention of $\vect{z_i}(t+1)=\vect{z_i}(t)$ when $\argmax u_i(\vect{z_i}(t),\vect{z_{-i}}(t))$ comprises multiple elements, adopted from~\cite{Hassan2023tac,raineri2025_tcns}. We  make the following standard assumption on the activation sequence. 

\begin{assumption}\label{a:activation}
   There exists a constant $T<\infty$ such that $\cup_{s=0}^{T-1} \mathcal R(t+s)=\mathcal V$, for any $t\geq 0$.
\end{assumption}

From the utility function in~\eqref{eq:utility}, an explicit closed-form expression for~\eqref{eq:update} can be derived, similar to~\cite[Proposition~1]{Hassan2023tac}, reason for which we omit the explicit proof.

\begin{proposition}\label{prop:dynamics}
Individual $i\in\mathcal R(t)$ updates their state as:
\begin{subequations}\label{eq:dinamics}
\begin{align}
\label{x-dinamic}
x_i(t+1) =&s(\vect{z}(t)),\\
\label{y-dinamic}
y_i(t+1)=&(1-\lambda_i)(1-\gamma_i(t))\sum\nolimits_{j \in \mathcal{V}} w_{ij}  y_j(t)\nonumber\\& +(1-\lambda_i)\gamma_i(t)\iota(t)+\lambda_i s(\vect z(t)) 
\end{align}
\end{subequations}
with 
$$s(\vect z(t))=\begin{cases}
    +1 \qquad &\text{if}\; \delta_i(\vect z(t)) >0, \\
    -1 \qquad &\text{if}\; \delta_i(\vect z(t)) <0, \\
    x_i(t) \qquad &\text{if}\; \delta_i(\vect z(t)) =0,
\end{cases}
$$
and 
$\delta_i(\vect z(t)) = 2 \beta_i (1-\lambda_i)[(1-\gamma_i(t))\sum_j w_{ij}y_j(t)+\gamma_i(t)\iota(t)]+(1-\beta_i)\sum_j a_{ij}x_j(t)$.
\end{proposition}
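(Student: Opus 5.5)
We would maximize $u_i$ jointly over $(x_i,y_i)\in\{-1,+1\}\times[-1,1]$ in two stages. First, for each fixed action $x_i\in\{-1,+1\}$ we maximize over $y_i$. Second, we compare the two resulting optimal values to select $x_i$.

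\textbf{Step 1: optimal opinion for a fixed action.} For fixed $x_i$, the only $y_i$-dependent part of \eqref{eq:utility} is
\[
-\beta_i(1-\lambda_i)(1-\gamma_i)\sum\nolimits_j w_{ij}(y_i-y_j)^2-\beta_i(1-\lambda_i)\gamma_i(y_i-\iota)^2-\lambda_i\beta_i(x_i-y_i)^2 .
\]
This is a strictly concave quadratic in $y_i$. Its leading coefficient is $-\beta_i[(1-\lambda_i)(1-\gamma_i)+(1-\lambda_i)\gamma_i+\lambda_i]=-\beta_i$, where we used $\sum_j w_{ij}=1$. Setting the derivative to zero gives
\[
y_i^\star(x_i)=(1-\lambda_i)\Big[(1-\gamma_i)\sum\nolimits_j w_{ij}y_j+\gamma_i\iota\Big]+\lambda_i x_i .
\]
This is a convex combination of points in $[-1,1]$, so it is feasible. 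Hence the unconstrained maximizer is also the constrained one, and it coincides with \eqref{y-dinamic} once the chosen action is substituted.

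\textbf{Step 2: choice of action.} Write $m_i:=(1-\gamma_i)\sum_j w_{ij}y_j+\gamma_i\iota$, so that $y_i^\star=(1-\lambda_i)m_i+\lambda_i x_i$. Define $\Delta:=u_i(+1,y_i^\star(+1))-u_i(-1,y_i^\star(-1))$, with $\vect{z_{-i}}$ fixed. We split $\Delta$ into two parts.

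The coordination term contributes $2\lambda_i(1-\beta_i)\sum_j a_{ij}x_j$, because the bracket equals $4$ when $x_j=x_i$ and $0$ otherwise.

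For the opinion part, the key identity is that the maximum over $y$ of a concave quadratic $-\beta_i\big(y^2-2yb\big)+c$ equals $\beta_i b^2+c$. Here the linear coefficient is $b(x_i)=(1-\lambda_i)m_i+\lambda_i x_i$. The constant $c$ includes $-\lambda_i\beta_i x_i^2=-\lambda_i\beta_i$, which is the same for both actions. Therefore the opinion part of $\Delta$ equals
\[
\beta_i\big[b(+1)^2-b(-1)^2\big]=4\beta_i\lambda_i(1-\lambda_i)m_i .
\]
Adding the two contributions gives $\Delta=2\lambda_i\delta_i(\vect z)$. Since $\lambda_i>0$, $\operatorname{sign}\Delta=\operatorname{sign}\delta_i$. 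This yields $s(\vect z)$: action $+1$ if $\delta_i>0$, action $-1$ if $\delta_i<0$.

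\textbf{Step 3: ties.} When $\delta_i=0$, both actions attain the same optimal value, so the argmax consists of two elements. The tie-breaker then keeps $x_i(t)$.

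The step we expect to need care is this tie case. Under the convention as stated, the tie-breaker keeps $\vect{z_i}(t)$ entirely, whereas \eqref{y-dinamic} with $s=x_i(t)$ gives $y_i^\star(x_i(t))$. We would resolve this exactly as in \cite[Proposition~1]{Hassan2023tac}: interpret the convention as applying to the action component, with the opinion then being the unique maximizer given that action. The rest of the argument is the routine algebra above.
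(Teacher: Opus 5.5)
Your proposal is correct and follows essentially the derivation the paper relies on: the paper omits the proof and defers to \cite[Proposition~1]{Hassan2023tac}, which maximizes the strictly concave quadratic in $y_i$ for each fixed action and then compares the two optimal utilities, whose difference is $2\lambda_i\delta_i(\vect z)$. Your reading of the tie-breaker as acting on the action, with the opinion then set to the unique best response, is exactly what makes \eqref{y-dinamic} hold when $\delta_i=0$. Note also that your computation (for instance, the leading coefficient $-\beta_i$ obtained from $\sum_j w_{ij}=1$) implicitly either evaluates the $j=i$ terms of the sums at the current state $\vect{z_i}(t)$ or assumes $w_{ii}=a_{ii}=0$; the paper's formula makes the same implicit assumption.
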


When $\vect\gamma(t)=\vect 0$ for all $t\geq 0$, the update rule in Proposition~\ref{prop:dynamics} reduces to the one of the uncontrolled coevolutionary dynamics with no prejudice, extensively studied in~\cite{raineri2024}. 
Before formulating our control problem, it is worth noticing that the two consensus states $\vect{x}=\vect{y}=-\vect{1}$ and $\vect{x}=\vect{y}=\vect{1}$ are always equilibria in the absence of any control action.

\begin{proposition}[Proposition 3 from~\cite{raineri2024}]\label{prop:equilibria}
If $\vect\gamma(t)=\vect 0$ for all $t\geq 0$, then  \eqref{eq:update}  has at least two equilibria: $\vect{x}=\vect{y}=-\vect{1}$ and $\vect{x}=\vect{y}=\vect{1}$. They are the only two equilibria in which the action vector is at a consensus ($x_i=x_j$, $\forall\,i,j\in\mathcal V$). 
\end{proposition}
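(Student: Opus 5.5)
The plan has two parts: first I check directly that both consensus states are fixed points of the uncontrolled update rule in Proposition~\ref{prop:dynamics}, and then I show that no other state with an action consensus can be an equilibrium. Throughout, set $\vect\gamma(t)=\vect 0$, so that \eqref{y-dinamic} reduces to $y_i(t+1)=(1-\lambda_i)\sum_j w_{ij}y_j(t)+\lambda_i s(\vect z(t))$ and $\delta_i(\vect z)=2\beta_i(1-\lambda_i)\sum_j w_{ij}y_j+(1-\beta_i)\sum_j a_{ij}x_j$. A state is an equilibrium if it is left unchanged by the update of any agent $i\in\mathcal R(t)$. By Assumption~\ref{a:activation}, every agent is eventually activated, so this is equivalent to every agent's best response leaving their state unchanged.

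\emph{Existence.} Take $\vect x=\vect y=\vect 1$. Since $\mat A$ and $\mat W$ are stochastic, $\sum_j w_{ij}y_j=1$ and $\sum_j a_{ij}x_j=1$. Hence $\delta_i=2\beta_i(1-\lambda_i)+(1-\beta_i)$, which is strictly positive: if $\beta_i<1$ the second term is positive, and if $\beta_i=1$ then the first term equals $2(1-\lambda_i)$, which is positive unless $\lambda_i=1$. In the degenerate case $\lambda_i=\beta_i=1$ we have $\delta_i=0$, and the tie-breaker keeps $x_i=1$. In every case $s=1=x_i$, so $x_i(t+1)=x_i(t)$. The opinion update gives $y_i(t+1)=(1-\lambda_i)\cdot 1+\lambda_i\cdot 1=1$. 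The state $\vect x=\vect y=-\vect 1$ is handled symmetrically, since the dynamics are odd under $\vect z\mapsto-\vect z$.

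\emph{Uniqueness among action-consensus states.} Suppose $\vect z^*$ is an equilibrium with $\vect x^*=\vect 1$; the case $-\vect 1$ is symmetric. At a fixed point every $i$ satisfies $s=x_i^*=1$ (or a tie that is resolved to $1$). The opinion equation then gives $y_i^*=(1-\lambda_i)\sum_j w_{ij}y_j^*+\lambda_i$, that is, $\vect y^*=(\mat I-\mat\Lambda)\mat W\vect y^*+\mat\Lambda\vect 1$ with $\mat\Lambda=\mathrm{diag}(\lambda_i)$. Since every $\lambda_i>0$, the matrix $(\mat I-\mat\Lambda)\mat W$ has all row sums strictly less than $1$, so its spectral radius is below $1$. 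This linear system therefore has a unique solution, and $\vect y^*=\vect 1$ satisfies it. Hence $\vect y^*=\vect 1$. Equivalently, one can argue via the maximum principle: taking $i$ to be the agent with minimal $y_i^*$ forces $y_i^*\ge(1-\lambda_i)y_i^*+\lambda_i$, so $y_i^*\ge 1$.

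I expect the main obstacle to be bookkeeping in edge cases, namely the degenerate parameter values where $\delta_i=0$ and one must rely on the tie-breaker. In those cases I still need to confirm that $s=x_i$ holds. Once it does, the contraction argument applies unchanged, because it uses only $\lambda_i>0$.
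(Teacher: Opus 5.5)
Your proof is correct: both consensus states are fixed points of the uncontrolled update in Proposition~\ref{prop:dynamics}, including the degenerate tie case $\lambda_i=\beta_i=1$, and uniqueness follows because once $\vect x^*=\pm\vect 1$, the opinion fixed-point equation $\vect y^*=(\mat I-\mathrm{diag}(\vect\lambda))\mat W\vect y^*+\mathrm{diag}(\vect\lambda)\vect x^*$ has a unique solution. The paper does not prove this statement itself but imports it from~\cite{raineri2024}; your route is the standard one, resting on invertibility of the same matrix the paper inverts as $\mat M$ in Algorithm~\ref{alg}, and the edge-case concern in your last paragraph does not arise for uniqueness, since at a fixed point $s(\vect z^*)=x_i^*$ holds directly from \eqref{x-dinamic}.
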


\subsection{Problem Statement}

We consider the problem of steering a population from one consensus state to the other. 
Without any loss in generality, we assume that the initial consensus is $\vect{x}(0)=\vect{y}(0)=-\vect{1}$, and thus the goal is to reach $\vect{x}=\vect{y}=\vect{1}$. An authority can execute a controller that decides which nodes to target with the external input, and what information to broadcast to them. Concerning the latter, since our goal is to maximize $\vect{z}(t+1)$, it is always convenient to set $\iota(t)=1$.



Hence, the control problem ultimately reduces to designing a control law $\Gamma:t\to\vect{\gamma}(t)$ that determines the set of target nodes $\mathcal C(t)$ in order to guarantee convergence to the all-$1$ consensus. 
In this work, for the sake of clarity, we consider the further simplifying assumption that the controller can only choose whether to target or not a node (and when), but not how much. In other words, we assume binary entries $\gamma_i(t)\in\{0,1\}$, to denote which nodes are targeted at time $t$. Its weakening is set for future extensions.
In the following, we  summarize all the assumptions made for this study.

\begin{assumption}
\label{a:model}
Consider a two-layer network $\mathcal G=(\mathcal V,\mathcal E_A,\mat{A},\mathcal E_W,\mat{W})$ with $\mat{A}$ and $\mat{W}$ stochastic and irreducible. Let $\vect{x}(0)=\vect{y}(0)=-\vect{1}$, and assume that $\vect{\gamma}(t)\in\{0,1\}^n$ and $\iota(t)=1$ for all $t\geq 0$.
\end{assumption}

\begin{remark}\label{rem:set}
    Under Assumption~\ref{a:model}, there is a one-to-one correspondence between $\vect{\gamma}(t)$ and the target set $\mathcal C(t)$. Hence, we can freely see a control law in terms of designing the control input $\vect\gamma(t)$ or  the target set $\mathcal C(t):=\{i:\gamma_i(t)=1\}$.
\end{remark}


Due to the generality of Assumption~\ref{a:activation}, a given control law $\Gamma$ and $\vect z(0)$ can produce different trajectories $\vect z(t)$ depending on the  activation sequence. For this reason, we introduce the probability that the dynamics in \eqref{eq:dinamics} with control law $\Gamma$ converges in finite time to $\vect{x^*}=\vect{1}$, computed over the probability space generated by the activation sequence as 
    $\phi(\Gamma)=\mathbb P[\exists\,T<\infty:\vect{x}(t)=\vect{1}, \forall\,t\geq T]$. 
Hence, our  objective is to design $\Gamma$  to ensure $\phi(\Gamma)=1$. Finally, to compare different control laws, it is convenient to define a cost associated with the control effort as follows.

\begin{definition}\label{def:cost}
  The total control effort up to time $T$ is
    \begin{equation}\label{eq:cost}
        J=\int_0^T\vect{\gamma}(t)^\top\vect{1}dt.
    \end{equation}
\end{definition}

\begin{problem}\label{problem2}
       Design a control law $\Gamma:t\to\vect{\gamma}(t)$ such that, under Assumptions~\ref{a:activation}--\ref{a:model}, $\phi(\Gamma)=1$, with the aim of reducing the total control effort in \eqref{eq:cost}.
\end{problem}

The meaning of reducing the total control effort in Problem~\ref{problem2} is context-dependent, and precise details will be given in the two sections below, that are devoted to designing cost-effective constant and feedback control laws, respectively.


\section{Constant Control Input Approach}\label{sec:constant}

\begin{assumption}\label{a:constant}
  Assume $\vect{\gamma}(t)=\vect{\gamma}$, for all $t \geq 0$.
\end{assumption}

Since each constant control law $\Gamma$ is associated with a (constant) target set $\mathcal C_{\Gamma} := \{i:\gamma_i=1\}$, we can simply write $\phi(\Gamma)$ in terms of the target set, denoting it as $\phi(\mathcal C_{\Gamma}):=\phi(\Gamma)$. Furthermore, observe that the total control effort is proportional to the number of targeted nodes, i.e, $J=|\mathcal C_\Gamma|T$. Hence, Problem~\ref{problem2} reduces to finding the minimal subset of nodes $\mathcal C$ to target such that $\phi(\mathcal C)=1$. First, we state a result that guarantees convergence of the dynamics under Assumptions~\ref{a:activation}--\ref{a:constant}. The proof is reported in the Appendix.

\begin{proposition}\label{prop:convergence}
   The dynamics in \eqref{eq:update} under Assumptions~\ref{a:activation}--\ref{a:constant} is monotonically non-decreasing and converges to an equilibrium point $(\vect x^*, \vect y^*)$. Specifically, $\vect{x}(t)$ converges to $\vect{x^*}$ in finite time, and $\vect{y}(t)$ converges to $\vect{y^*}$ asymptotically.
\end{proposition}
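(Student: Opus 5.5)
We argue by monotonicity, working under Assumptions~\ref{a:activation}--\ref{a:constant} with $\iota(t)=1$. The idea is that the update map of Proposition~\ref{prop:dynamics} is order-preserving in $\vect z=(\vect x,\vect y)$ and that the initial state $-\vect 1$ is the minimal element of the state space. Monotonicity in time should then follow by induction, and convergence from boundedness plus finiteness of the action space.

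\emph{Step 1 (monotonicity of the map).} For each $i$, define $F_i(\vect z)=(1-\gamma_i)\sum_j w_{ij}y_j+\gamma_i$ and $\delta_i(\vect z)=2\beta_i(1-\lambda_i)F_i(\vect z)+(1-\beta_i)\sum_j a_{ij}x_j$. Both are non-decreasing in every component of $\vect x$ and $\vect y$, since the weights are nonnegative. It follows that $s$ is non-decreasing in $\vect z$, including the tie case. Indeed, if $\vect z\le\vect z'$ componentwise and $\delta_i(\vect z)=0$, then $s(\vect z)=x_i\le x_i'$. If moreover $\delta_i(\vect z')=0$, then $s(\vect z')=x_i'$. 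If instead $\delta_i(\vect z')>0$, then $s(\vect z')=1$. The remaining cases are immediate. Hence the new $y_i=(1-\lambda_i)F_i+\lambda_i s$ is also non-decreasing. Non-updating agents keep their state, which is trivially monotone. So for any activation set $\mathcal R$, the one-step map $\Phi_{\mathcal R}$ is order-preserving.

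\emph{Step 2 (monotone trajectory).} We show $\vect z(t+1)\ge\vect z(t)$ for all $t$ by induction. At $t=0$ this holds trivially because $\vect z(0)=-\vect 1$ is the minimum. Assume $\vect z(t)\ge\vect z(t-1)$. This is the main obstacle, because the activation sets at times $t-1$ and $t$ differ, so order-preservation of a single map is not enough. We therefore argue agent-wise.

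For $i\in\mathcal R(t)$ we compare $\Phi_i(\vect z(t))$ with $z_i(t)$. If $i$ was last updated at some time $\tau<t$, then $z_i(t)=\Phi_i(\vect z(\tau))$. By the induction hypothesis, iterated, $\vect z(t)\ge\vect z(\tau)$, and Step 1 then gives $\Phi_i(\vect z(t))\ge\Phi_i(\vect z(\tau))=z_i(t)$. If $i$ has never been updated, then $z_i(t)=(-1,-1)$, which is trivially below anything. For $i\notin\mathcal R(t)$ the state is unchanged. This gives $\vect z(t+1)\ge\vect z(t)$.

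We must check one detail: the tie case uses $x_i(t)$ itself, which equals the previous output $s(\vect z(\tau))$. Since $s$ is monotone in the full state including $x_i$, this is consistent.

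\emph{Step 3 (convergence).} Each $x_i(t)\in\{-1,1\}$ is non-decreasing, so it switches at most once. Hence $\vect x(t)$ is eventually constant, equal to some $\vect x^*$, after a finite time $T_0$. Each $y_i(t)$ is non-decreasing and bounded by $1$, so it converges to some $y_i^*$. For $t\ge T_0$, Assumption~\ref{a:activation} ensures every agent updates infinitely often. Passing to the limit in \eqref{y-dinamic} along update times gives $y_i^*=(1-\lambda_i)F_i(\vect y^*)+\lambda_i x_i^*$.

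It remains to show that $(\vect x^*,\vect y^*)$ is an equilibrium, i.e., that $s$ evaluated at the limit returns $x_i^*$. Along update times after $T_0$ we have $s(\vect z(t))=x_i^*$. If $x_i^*=1$, then either $\delta_i\ge 0$ with a tie kept at $1$, or $\delta_i>0$; in both cases the limit satisfies $\delta_i(\vect z^*)\ge 0$. Since $\delta_i$ is non-decreasing in $t$, if $\delta_i(\vect z^*)=0$ then the tie rule keeps $x_i^*$, so this case is fine. If $x_i^*=-1$, then $\delta_i(\vect z(t))\le 0$ for all $t\ge T_0$. The limit $\delta_i(\vect z^*)$ could reach $0$ from below, but the tie rule then again returns $x_i^*=-1$. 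Hence $(\vect x^*,\vect y^*)$ is a fixed point of every $\Phi_{\mathcal R}$, which completes the proof.
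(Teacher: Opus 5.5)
Your proof is correct and follows essentially the same route as the paper, whose Lemma~\ref{lemma:x-monotonicity} likewise compares $\delta_i$ at an agent's last update time with its value just before the putative first decrease, using that the state has only grown in between, and then concludes by bounded monotone convergence. The differences are organizational and in your favour: a single induction on the joint state $\vect z$ via order-preservation of the update map replaces the split into Lemmas~\ref{lemma:y-monotonicity}--\ref{lemma:x-monotonicity} and the appeal to \cite{raineri2024} for non-targeted opinions, and your Step~3 explicitly checks (including the tie cases) that the limit $(\vect x^*,\vect y^*)$ is an equilibrium, which the paper leaves implicit.
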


Building on Proposition~\ref{prop:convergence} and inspired by~\cite{raineri2025_tcns}, we address Problem~\ref{problem2} in two steps. First, we propose Algorithm~\ref{alg} to check if a specified targeting set $\mathcal C$ is sufficient to guarantee that $\phi(\mathcal C)=1$. This is because Proposition~\ref{prop:convergence} only establishes convergence to an equilibrium $(\vect x^*, \vect y^*)$, which for some choices of $\mathcal C$ is not guaranteed to be the all-$1$ consensus equilibrium. Second, we define Algorithm~\ref{alg:optimal_C} as an effective heuristic to find the minimum number of nodes to target.

Algorithm~\ref{alg} starts from a candidate set of nodes with action equal to $1$ at the equilibrium, $\mathcal A(0)$. At each step $k$, we construct the unique candidate equilibrium that has action equal to $+1$ only for those nodes in $\mathcal A(k)$, namely \eqref{eq:candidate}, and we check if the candidate equilibrium is actually an equilibrium, by computing $\delta_i$. If a node $i\notin \mathcal A(k)$ has $\delta_i>0$, then the candidate is not an equilibrium since $i$ will eventually switch action to $+1$. In this case, the node is added to the set $\mathcal A(k+1)$ and the iteration index $k$ increased by one. This procedure ultimately converges (in no more than $n$ steps) to the set of all  nodes that will switch to $+1$. Hence, $\phi(\mathcal C)=1$ if and only if $\mathcal A_f=\mathcal V$. 
The theoretical arguments that guarantee the well-functioning of the algorithm are based on Proposition~\ref{prop:convergence}, which guarantees that nodes that switch to $+1$ action never switch back. The  details are omitted, since they closely follow~\cite[Theorem~3]{raineri2025_tcns}. 

\begin{algorithm}[t]
\caption{Equilibrium computation
\label{alg}
}
\KwData{$\mat{W}, \mathcal C, {\vect x_0}, \vect \lambda$ and $\vect \beta$}
\KwResult{$\mathcal A_f:=\mathcal A(k)$}
$k \gets 1;\; \mathcal A(0) \gets \emptyset;\; 
\mathcal A(1) \gets \{i \in \mathcal{V}:{x_0}_i = +1\}$\; $\gamma_i \gets +1$ $\forall i \in \mathcal{C}$;\,
$\mat{M}\gets(\mat{I}-(\mat I- \text{diag}(\vect\lambda))\mat W)^{-1}$\;
\While{$\mathcal A(k) \neq \mathcal A(k-1)$}{
\begin{subequations}\label{eq:candidate}\begin{align}\label{x-candidate}{\hat x}_i\gets \begin{cases}
1 \; \text{for all} \; i \in \mathcal{A}(k)\\
-1 \; \text{for all} \; i \notin \mathcal{A}(k);
\end{cases}\\
\label{y-candidate}{\hat y}_i\gets \begin{cases}
(\mat M \text{diag}(\vect \lambda) \vect{\hat x})_i \; \text{for all} \; i \notin \mathcal{C}\\
1-\lambda_i+\lambda_i\hat{x}_i  \; \text{for all} \; i \in \mathcal{C};
\end{cases}\end{align}
\end{subequations}
$k \gets k+1;\;
\mathcal A(k) \gets \mathcal A(k-1)$\;
check    \For{$i \in \mathcal V \And i \notin\mathcal A(k)$}
    {
        \If{$\delta_i(\vect{\hat x},\vect{\hat y})>0$} 
        {$\mathcal A(k) \gets \mathcal A(k) \cup \{i\}$;}
    }
}
\end{algorithm}

\begin{algorithm}
\caption{Optimal targeting set identification \label{alg:optimal_C}
}
\KwData{$\mat{W}, \vect \lambda$, $\vect \beta$, $\varepsilon$,  $n$, {$K$}}
\KwResult{$\hat\mu_X$, empirical invariant distribution of $X(k)$ }
$k \gets 1;$ 
${X}(k) \gets \mathcal{V}$;

{\While{$k <$ {$K$}}{
$k \gets k+1$; 
$X(k) \gets X(k-1)$\;
Choose at random a node $r \in \mathcal V$\;
\If{$r \in X(k)$ {\bf and} Algorithm \ref{alg} with $\mathcal C=X(k-1)\setminus\{r\}$ yields $\mathcal{A}_f = \mathcal{V}$}
{$X(k) \gets X(k) \backslash \{r\}$ \;
}
\Else{
$X(k)\gets  X(k) \cup \{r\}$ with probability $\varepsilon$ 
}
 }
 }
\end{algorithm}
Algorithm~\ref{alg:optimal_C} starts by checking that the problem is feasible, i.e., $\phi(\mathcal V)=1$. Then, we define a Markov chain $X(k)$ over the space of admissible control sets $\bar{\mathscr{C}}$, i.e., ${\mathcal{C}} \in \{0,1\}^n$ such that $\phi(\mathcal{{C}})=1$, initialized as  $X(0)=\mathcal V$. At the \mbox{$k$-th} iteration, given the current state $X(k)$, we select a node $r\in\mathcal V$ uniformly at random. If the node belongs to the current state and  Algorithm~\ref{alg} guarantees that $\phi(X(k)\setminus\{r\})=1$, the Markov chain jumps to $X(k)\setminus\{r\}$. If the node does not belong to the current state, the chain jumps to $X(k)\cup\{r\}$ with probability $\varepsilon$, and is not update otherwise. This procedure allows the algorithm to explore the space of control sets with $\phi=1$, obtaining the following result. 

\begin{theorem}\label{th:1}
The invariant distribution $\mu_\varepsilon \in [0,1]^{\mathscr{\bar{C}}}$ of $X(t)$ is such that $\lim_{\varepsilon \searrow 0} \mu_\varepsilon = \mu$ where $\mu$ is the uniform distribution on
    the solution of Problem~\ref{problem2}, under Assumption~\ref{a:activation}--\ref{a:constant}.
\end{theorem}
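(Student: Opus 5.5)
The plan is to recognize $X(k)$ as a Metropolis-type chain on the finite set $\bar{\mathscr C}$ and to obtain its invariant distribution in closed form through detailed balance. The key structural input is monotonicity of admissibility. If $\phi(\mathcal C)=1$ and $\mathcal C\subseteq\mathcal C'$, then $\phi(\mathcal C')=1$. I would derive this from Proposition~\ref{prop:convergence} together with a comparison argument.

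Setting $\gamma_i=1$ with $\iota=1$ replaces $\sum_j w_{ij}y_j\le 1$ by $1$. This can only increase the right-hand side of \eqref{eq:dinamics} and $\delta_i$. Moreover, the update map is monotone non-decreasing in $\vect z$. Hence, for the same activation sequence, the trajectory under $\mathcal C'$ dominates the one under $\mathcal C$ componentwise, by induction on $t$. Equivalently, in Algorithm~\ref{alg} the final set $\mathcal A_f$ is monotone in $\mathcal C$. Consequently, adding a node never leaves $\bar{\mathscr C}$, so the chain is well defined on $\bar{\mathscr C}$. Since $\mathcal V\in\bar{\mathscr C}$ by the feasibility check, every admissible set can be reached from $\mathcal V$ by single additions and removals that stay admissible. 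One can go up from $\mathcal C$ to $\mathcal V$ and come back down, because every superset of $\mathcal C$ is admissible. Together with self-loops, this makes the chain irreducible and aperiodic for $\varepsilon>0$, so $\mu_\varepsilon$ is unique.

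Next I would verify detailed balance with $\mu_\varepsilon(\mathcal C)\propto\varepsilon^{|\mathcal C|}$. Take $\mathcal C$ and $\mathcal C\cup\{r\}$, both in $\bar{\mathscr C}$. The transition $\mathcal C\cup\{r\}\to\mathcal C$ has probability $1/n$, since removal is accepted exactly because $\mathcal C$ is admissible. The reverse transition $\mathcal C\to\mathcal C\cup\{r\}$ has probability $\varepsilon/n$. Then $\varepsilon^{|\mathcal C|+1}\cdot\tfrac1n=\varepsilon^{|\mathcal C|}\cdot\tfrac{\varepsilon}{n}$, so $\mu_\varepsilon(\mathcal C)=\varepsilon^{|\mathcal C|}/Z_\varepsilon$ with $Z_\varepsilon=\sum_{\mathcal C'\in\bar{\mathscr C}}\varepsilon^{|\mathcal C'|}$.

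Let $m=\min_{\mathcal C\in\bar{\mathscr C}}|\mathcal C|$ and $N_m$ the number of admissible sets of size $m$. Under Assumption~\ref{a:constant} these are exactly the solutions of Problem~\ref{problem2}, since $J=|\mathcal C|T$. Dividing numerator and denominator by $\varepsilon^m$ gives $\mu_\varepsilon(\mathcal C)\to 1/N_m$ if $|\mathcal C|=m$, and $\mu_\varepsilon(\mathcal C)\to 0$ otherwise, as $\varepsilon\searrow0$. This is the claimed uniform limit.

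The main obstacle is the monotonicity-in-$\mathcal C$ step. It requires a careful coupling argument showing that the best-response map, including the tie-breaker and the $\gamma$-modified opinion term, is order preserving. It must also be checked against the deterministic equivalence between $\phi(\mathcal C)=1$ and $\mathcal A_f=\mathcal V$ in Algorithm~\ref{alg}, so that acceptance decisions coincide with membership in $\bar{\mathscr C}$. I would first try to prove it by induction on the iterations of Algorithm~\ref{alg}, using that the candidate $\vect{\hat y}$ and $\delta_i$ increase when $\mathcal C$ grows.
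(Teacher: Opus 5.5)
Your argument is correct, but it takes a self-contained route where the paper only cites. The paper asserts supermodularity of the game (by analogy with \cite[Proposition~1]{raineri2025_tcns}), pairs it with Proposition~\ref{prop:convergence}, and applies \cite[Theorem~2]{Como2022supermodular} as a black box. You instead unpack that black box: you prove upward-closedness of the admissible family through an order-preserving coupling, which is exactly the consequence of supermodularity the cited result uses. You then compute $\mu_\varepsilon(\mathcal C)=\varepsilon^{|\mathcal C|}/Z_\varepsilon$ by detailed balance.

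The step you flag as the main obstacle does go through. For $\mathcal C\subseteq\mathcal C'$ and $\vect z\le\vect z'$ one has $\delta_i\le\delta_i'$. In the tie case $\delta_i=0$ one gets $s=x_i\le x_i'\le s'$, since $\delta_i'\ge 0$. In Algorithm~\ref{alg}, $\mat{M}\,\mathrm{diag}(\vect\lambda)$ is nonnegative and row-stochastic, so $(\mat{M}\,\mathrm{diag}(\vect\lambda)\vect{\hat x})_i\le 1-\lambda_i+\lambda_i\hat x_i$. The update $\mathcal A(k)\mapsto\mathcal A(k+1)$ is therefore monotone in both $\mathcal A(k)$ and $\mathcal C$, hence so is $\mathcal A_f$.

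The citation is shorter and places the result in a general theory of targeting in supermodular games, but it leaves the hypothesis check implicit. Your version gives an explicit rate: each non-optimal set has mass at most $\varepsilon/N_m$. It also shows that the only property really needed is upward-closedness of the acceptance test that Algorithm~\ref{alg:optimal_C} actually runs. Proposition~\ref{prop:convergence} enters only through the identification $\phi(\mathcal C)=1\iff\mathcal A_f=\mathcal V$, which the paper also takes for granted.

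Proving monotonicity directly on Algorithm~\ref{alg}, as you prefer, is the safer option. As printed, Algorithm~\ref{alg} computes the non-targeted candidate opinions with the uncontrolled matrix $\mat{M}$, which underestimates them. So that identification may hold only in the direction $\mathcal A_f=\mathcal V\Rightarrow\phi(\mathcal C)=1$, while your reversibility computation remains valid for the chain as implemented.
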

\begin{proof}
Proposition~\ref{prop:convergence} and the supermodularity of the game  (which can be shown analogously to~\cite[Proposition~1]{raineri2025_tcns}), ensures that the assumptions of~\cite[Theorem~2]{Como2022supermodular} are satisfied. Application of~\cite[Theorem~2]{Como2022supermodular} yields that $\mu_\varepsilon$ converges, as $\varepsilon \searrow 0$, to the uniform probability measure $\mu$ over the admissible control sets $\mathcal{C}$ that minimize the total control effort. For a detailed derivation applying similar arguments, though for a different algorithmic setting, see~\cite[Theorem~4]{raineri2025_tcns}.
\end{proof}


\section{Closed-loop Input Approach}\label{sec:feedback}
A major limitation of the algorithm proposed in Section~\ref{sec:constant} is in its static nature. In fact, as the network approaches $\vect x = \vect y = \vect 1$, maintaining the same set of targeted nodes may become unnecessary and may increase the overall control effort.
To design a time-varying controller, however, we should keep in mind that the applicability of 
Algorithm~\ref{alg} is centered around selecting target nodes that ensure the monotonicity of $\vect{x}(t)$ over time. If the target nodes are dynamically adjusted (e.g., an initially targeted node is no longer targeted after some time $\bar t$), we have in general no guarantees that $\vect x(t)$ remains monotone after $\bar t$, calling for the careful design of a time-varying scheme. 


Our revision algorithm is based on the following idea. Starting from an initial target set, we ask whether a particular node $i$ in the target set can be removed at some $\bar t> 0$, without compromising the fact that $\phi=1$. There are two sequentially evaluated sufficient conditions that enable this determination. First, we restrict the admissible revisions (nodes to remove from the target set) to only those ones that are guaranteed to not result in a future flip back of any node actions from $+1$ to $-1$. That is, removing nodes from the target set at time $\bar t$ does not change the monotonically non-decreasing nature of $\vect x(t)$ for $t\geq \bar t$. Second, among the admissible sets, we only allow revisions for which Algorithm~\ref{alg}, when re-initialized with the current action profile at $\bar t$, provides a guaranteed convergence to the desired equilibrium of $\vect x^* = \vect y^* = \vect 1$.

To guarantee the first condition, given $\bar t\geq 0$, we define $\mathcal{A}(\bar t):=\{i \in \mathcal{V}:x_i(\bar t)=+1\}$. Moreover, given a node $i\in\mathcal C(\bar t)$, we define the vector $ \vect {\tilde y}(\bar t)$, with components
\begin{equation}\label{eq:tilde_y}{\tilde y_i}(\bar t):=\min_{\ell \in \mathcal{V}} y_\ell(\bar t)(1-\lambda_i)(1-\tilde\gamma_i)+(1-\lambda_i)\tilde\gamma_i + \lambda_i x_i(\bar t).\end{equation} 

\begin{proposition} \label{prop:delta}
If $\delta_{i}(\vect x(\bar t), \vect {\tilde y}(\bar t))\geq 0$ for all $i\in\mathcal A(\bar t)$, with $\vect {\tilde y}(\bar t)$ from \eqref{eq:tilde_y}, then it holds that $\mathcal{A}(t)\supseteq \mathcal A(\bar t)$, for all $t>\bar t$ under the control policy $\mathcal C(t)=\mathcal C(\bar t)\setminus\{i\}$, for all $t>\bar t$. 
\end{proposition}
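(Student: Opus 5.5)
Our plan is an invariance argument. We show by induction over time that, as long as every node in $\mathcal A(\bar t)$ keeps action $+1$, every opinion remains bounded below by a quantity that makes $\delta_j\ge 0$ for each $j\in\mathcal A(\bar t)$. By the tie-breaker in Proposition~\ref{prop:dynamics}, this prevents any of these nodes from switching to $-1$. Throughout, we read $\tilde\gamma_j$ in \eqref{eq:tilde_y} as the revised input, i.e., $\tilde\gamma_j=1$ if $j\in\mathcal C(\bar t)\setminus\{i\}$ and $0$ otherwise. We also write $m(t):=\min_\ell y_\ell(t)$.

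\textbf{Step 1: the minimum opinion does not decrease.} We show that $m(t)\ge m(\bar t)$ for all $t\ge\bar t$, as long as $\mathcal A(t)\supseteq\mathcal A(\bar t)$. Take an active node $j$. By \eqref{y-dinamic}, $y_j(t+1)$ is a convex combination: $\sum_k w_{jk}y_k\ge m(t)$ (or $\iota=1\ge m(t)$ if $j$ is targeted), weighted by $1-\lambda_j$, plus $\lambda_j x_j(t+1)$. If $x_j(t+1)=+1$, then $y_j(t+1)\ge m(t)$ directly. If $x_j(t+1)=-1$, the bound is less immediate. We will use that $m(\bar t)\ge -1$, and that under the revised policy the only candidate nodes with $x=-1$ lie outside $\mathcal A(\bar t)$. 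For these nodes we instead compare with the weakest bound relevant to the $\delta$ test. In fact, only a lower bound on $y_j$ for $j\in\mathcal A(\bar t)$, and on the weighted averages $\sum_k w_{jk}y_k$, enters $\delta_j$. So we refine the claim to
\[
y_k(t)\ge \tilde y_k(\bar t)\ \text{for}\ k\in\mathcal A(\bar t), \qquad y_k(t)\ge m(\bar t)\ \text{otherwise}.
\]
The $-1$ case is handled by observing that $\tilde y_k$ for $k\notin\mathcal A(\bar t)$ already contains the $\lambda_k(-1)$ term. We therefore compare against $\vect{\tilde y}(\bar t)$ componentwise, and the induction claim becomes $\vect y(t)\ge\vect{\tilde y}(\bar t)$. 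At $t=\bar t$ this holds because $y_k(\bar t)$, computed from its last update, is at least the expression \eqref{eq:tilde_y}; this needs the minimum taken over the previous step. This base case is the delicate point, and if necessary we interpret \eqref{eq:tilde_y} as a lower bound valid after each node's first post-$\bar t$ update.

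\textbf{Step 2: monotonicity of $\delta$ and the induction.} The quantity $\delta_j(\vect x,\vect y)$ in Proposition~\ref{prop:dynamics} is nondecreasing in $\vect x$ and in $\vect y$, since all coefficients are nonnegative. Suppose $\vect x(t)\ge\vect x(\bar t)$ on $\mathcal A(\bar t)$ (entries equal to $+1$ there, and possibly more $+1$ elsewhere) and $\vect y(t)\ge\vect{\tilde y}(\bar t)$. Then
\[
\delta_j(\vect z(t))\ge\delta_j(\vect x(\bar t),\vect{\tilde y}(\bar t))\ge0 \quad\text{for all } j\in\mathcal A(\bar t).
\]
Hence any active $j\in\mathcal A(\bar t)$ keeps $x_j=+1$, either strictly or by the tie-breaker. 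For the opinion update, $y_j(t+1)\ge(1-\lambda_j)(1-\tilde\gamma_j)m(\bar t)+(1-\lambda_j)\tilde\gamma_j+\lambda_j x_j(t+1)$. For $j\in\mathcal A(\bar t)$ we have $x_j(t+1)=1$, so this equals $\tilde y_j(\bar t)$. For $j\notin\mathcal A(\bar t)$ we have $x_j(t+1)\ge -1=x_j(\bar t)$, so again $y_j(t+1)\ge\tilde y_j(\bar t)$. Inactive nodes keep their state. This closes the induction on both invariants, which yields $\mathcal A(t)\supseteq\mathcal A(\bar t)$.

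\textbf{Main obstacle.} The main obstacle is making the bound $\sum_k w_{jk}y_k(t)\ge m(\bar t)$ persist. This requires $\tilde y_k(\bar t)\ge m(\bar t)$ for every $k$, which can fail for $k\notin\mathcal A(\bar t)$ because of the $-\lambda_k$ term. We would first try to bound the weighted average by $\min_k\tilde y_k(\bar t)$ instead. If that weakens the hypothesis too much, we would use that the original pre-$\bar t$ dynamics is monotone (Proposition~\ref{prop:convergence}), so that $m(t)$ is realized by a node whose opinion is already a fixed point of the lower bound. This gives $m(t)\ge m(\bar t)$ directly.
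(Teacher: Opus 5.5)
\textbf{The two steps you flag are genuine gaps, and they cannot be closed, because the statement itself is false as written.} Your route matches the paper's proof. You propagate $\vect x(t)\ge\vect x(\bar t)$ and $\vect y(t)\ge\vect{\tilde y}(\bar t)$ up to the first flip, then use the monotonicity of $\delta_j$. The paper phrases this as a first-violation contradiction and asserts $\tilde y_j(\bar t)\le y_j(\bar t+t^*)$ ``by construction''. That assertion is exactly the base-case and persistence issues you raise, and your fallbacks do not repair them:
\begin{itemize}
\item \emph{Persistence.} The floor $\min_k\tilde y_k(\bar t)$ is not self-reproducing: an untargeted node with action $-1$ maps any floor $c>-1$ to $(1-\lambda_k)c-\lambda_k<c$. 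After $\bar t$, Lemma~\ref{lemma:y-monotonicity} no longer applies, since the input has changed and the removed node's opinion can drop. So nothing gives $\min_\ell y_\ell(t)\ge\min_\ell y_\ell(\bar t)$, let alone a minimiser sitting at a fixed point.
\item \emph{Base case.} A node whose current opinion was computed before $\bar t$ averaged older neighbour opinions. Because opinions are nondecreasing before $\bar t$, those older values can lie below $\min_\ell y_\ell(\bar t)$. While the node stays inactive, its opinion stays below $\tilde y_k(\bar t)$. Restricting \eqref{eq:tilde_y} to post-$\bar t$ updates changes the statement and still does not control $\delta_j$ while such a neighbour is stale.
\end{itemize}

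Here is a counterexample. Take $n=3$ with $w_{12}=w_{23}=w_{31}=1$ and $a_{12}=a_{13}=1/2$, $a_{21}=a_{31}=1$; both matrices are stochastic and irreducible. Set $\lambda_k=0.4$ for all $k$ and $(\beta_1,\beta_2,\beta_3)=(0.9,0.2,0.6)$. Use $\mathcal C=\{1\}$ up to $\bar t=3$, with activations $\mathcal R(0)=\{1\}$, $\mathcal R(1)=\{2\}$, $\mathcal R(2)=\mathcal R(3)=\{3\}$, $\mathcal R(4)=\{1\}$, and $\mathcal R(t)=\mathcal V$ for $t\ge 5$. This satisfies Assumption~\ref{a:activation} with $T=4$.
\begin{itemize}
\item At $t=0$, node~1 switches ($\delta_1=0.98$) and gets $y_1=1$.
\item At $t=1$, node~2 switches purely through its action term ($\delta_2=-0.24+0.8>0$) and gets $y_2=-0.2$.
\item At $t=2$, node~3 switches and gets $y_3=1$.
\end{itemize}
Hence $\vect x(3)=\vect 1$ and $\vect y(3)=(1,-0.2,1)$. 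Removing $i=1$ gives $\tilde y_k(3)=0.6\cdot(-0.2)+0.4=0.28$ for every $k$. The hypothesis then holds with margin: $\delta_1=1.08\cdot 0.28+0.1$, $\delta_2=0.24\cdot 0.28+0.8$, and $\delta_3=0.72\cdot 0.28+0.4$.

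However, node~2 is still stale at $t=4$, with $y_2(4)=-0.2<\tilde y_2(3)$. The now untargeted node~1 therefore computes $\delta_1(\vect z(4))=1.08\cdot(-0.2)+0.1=-0.116<0$ and switches to $-1$. So $1\in\mathcal A(3)\setminus\mathcal A(5)$, contradicting the claim. Because $\vect x(3)=\vect 1$, the minimum opinion cannot decrease before a flip in this example; the base case alone breaks the statement.

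Your induction does go through if you replace $\vect{\tilde y}(\bar t)$ with a floor that is genuinely invariant. One choice is $\ell_k:=\lambda_k x_k(\bar t)-(1-\lambda_k)$, with hypothesis $\delta_j(\vect x(\bar t),\vect\ell)\ge 0$ for $j\in\mathcal A(\bar t)$. This works because every opinion of node $k$, stale or fresh, is at least $\lambda_k x_k-(1-\lambda_k)$ for the action set in the same update, and before the first flip that action is at least $x_k(\bar t)$. When $\vect x(\bar t)=\vect 1$, the floor $\min\{y_k(\bar t),\tilde y_k(\bar t)\}$ also works. With $\vect{\tilde y}(\bar t)$ as defined in \eqref{eq:tilde_y}, the proposition needs an additional hypothesis.
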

\begin{proof}
     We proceed by contradiction. 
     Assume that the claim is false. Hence, it necessarily exists a time $t^*\geq0$ and  $i\in\mathcal A(\bar t)$ such that, it holds that $\delta_i(\vect{x}(\bar t+t^*), \vect y(\bar t+t^*))<0$, and $\delta_j(\vect{x}(\bar t+s), \vect y(\bar t+s))\geq 0$ for all $s\in\{0,\dots,t^*-1\}$ and  $j\in\mathcal A(\bar t)$. In other words, $i$ is the first node that can switch back from $+1$ to $-1$, if activated at time $\bar t+t^*$. This implies that $x_j(\bar t+t^*)=+1$ for all $j\in\mathcal A(\bar t)$, and so $x_j(\bar t+t^*)\geq x_j(\bar t)$ for all $j\mathcal\in \mathcal V$.

    The key observation is now that by construction $\tilde{y}_j(\bar t)\leq y_j(\bar t+t^*)$ for any  $j\in\mathcal V$. In fact, \eqref{y-dinamic} posits that the opinion of those individuals who update is a convex combination of the weighted average of their neighbors (which is bounded by the minimum opinion $\min_{i\in\mathcal V} y_i(\bar t)$), the input, and the action. Using this recursively from $\bar t$ to $\bar t+t^*$ and using the fact that the input is assumed to be constant for $s\geq \bar t$ and the action vector at time $\bar t+t^*$ is not smaller than the one at time  $\bar t$ (since the first switch from $+1$ to $-1$ occurs not before time $\bar t+t^*+1$), we obtain \eqref{eq:tilde_y}.
    
    Thus, on the one hand, $\delta_{i}(\vect x(\bar t+t^*), \vect {y}(\bar t+t))<0$, while,  on the other hand, $\delta_{i}(\vect x(\bar t+t), \vect {y}(\bar t+t)) \geq \delta_{i}(\vect x(\bar t), \vect {\tilde y}(\bar t))>0$, 
    which are in contradiction, yielding the claim.
\end{proof}

In other words, a targeted node~$i$ that satisfies the hypotheses of Proposition~\ref{prop:delta} can be removed from the target set at time $\bar t$ and it is still guaranteed that the dynamics $\vect{x}(t)$ remains monotonically non-decreasing for $t\geq \bar t$. This is key, because it now allows us to use Algorithm~\ref{alg} to check whether the controlled dynamics (now with constant target set equal to $\mathcal C(\bar t)\setminus\{i\}$) still converges to the desired equilibrium when the initial state is $\vect{x_0}=\vect{x}(\bar t)$ --- this addresses the second sufficient condition discussed above Proposition~\ref{prop:delta}. 

Building on this intuition, we design a closed-loop control law $\Gamma$ for Problem~\ref{problem2}, summarized in Algorithm~\ref{alg:final}. 
We start by defining a sequence of revision times $\mathcal T=\{T_1,T_2,\dots\}$. In general this sequence can be  deterministic, stochastic, or even state-dependent. We assume that the sequence is discrete and countable, so that for any $t\geq 0$, there exists $T_s\in\mathcal T$ such that $T_s\geq t$. Then, we initialize the control input as $\vect{\gamma}(t)=\vect{\gamma_0}$ for all $t\in[1,T_1]$, where $\vect{\gamma_0}$ is the constant control input solution of Algorithm~\ref{alg:optimal_C} with $\vect x_0 = \vect x(0)$.

At each subsequent revision time $T_k$, we use Proposition~\ref{prop:delta} to detect a node that can be potentially removed from the target set, without losing the monotonicity of $\vect{x}(t)$, and Algorithm~\ref{alg} to determine whether we can stop targeting that node without losing convergence. 
If we can stop targeting  $i$,  we revise the control input setting $\gamma_i(t)=0$ for all $t\geq T_k+1$,  keeping the other entries unchanged. 

\begin{algorithm}
\caption{Feedback control law
\label{alg:final}
}
\KwData{$\vect{x}(t),\vect{y}(t),\vect{\gamma}(t),t,\mathcal T, \mat{W},  \vect \lambda$ and $\vect \beta$}
\KwResult{$\vect \gamma(t+1)=\vect{\tilde{ \gamma}}$, closed-loop control input}
\If{$t\in\mathcal T$}{
 $\mathcal{A}\gets \{i \in \mathcal{V}: x_i=+1\}$; 
 $\mathcal{C}\gets \{i \in \mathcal{V}: \gamma_i=+1\}$\;
\For{$i \in \mathcal{C}$}{
$\vect{\tilde \gamma}\gets \vect{\gamma}(t)$;\,
$\tilde{\gamma}_i \gets 0$\;
$\tilde y_j \gets \min_k(y_k)(1-\lambda_j)(1-\tilde{\gamma}_j)+(1-\lambda_j)\tilde{\gamma}_j+\lambda_jx_j \; \forall j \in \mathcal{V}$\;
$\delta_j \gets 2\beta_j(1-\lambda_j)[(1-\tilde{\gamma}_j)\sum_k W_{jk}\tilde y_k + \tilde{\gamma}_j]+(1-\beta_j)\sum_k W_{jk}x_k$\;

        \If{$\delta_j>0 \; \forall j \in \mathcal{A}$} 
        {\If{Algorithm \ref{alg} with $\mathcal{C}=\mathcal{C}\backslash\{i\}$ and initial state $\vect x$ gives $\mathcal A_f=\mathcal V$}{$\gamma_i(t)\gets 0$\;
        {\bf break}}}
    }
}
\end{algorithm}

This algorithm ultimately yields a closed-loop control input that is piece-wise constant and, at each revision time $T_k$, the resulting $\vect\gamma(t)$ is monotonically non-increasing, guaranteeing that the total control effort is not larger than one with a constant control input, as summarized in Theorem~\ref{th:2}.


\begin{theorem}\label{th:2}
    The feedback control in Algorithm~\ref{alg:final} initialized using Algorithm~\ref{alg:optimal_C} solves Problem~\ref{problem2} under Assumptions~\ref{a:activation}--\ref{a:model}, and the total control effort in \eqref{eq:cost} is not larger than the one under constant control input computed with Algorithm~\ref{alg:optimal_C}. Moreover, there exists $\bar T$ such that $\vect{\gamma}(t)=\vect{0}$ for all $t\geq \bar T$.
\end{theorem}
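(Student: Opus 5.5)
The proof will run on an induction over the revision times in $\mathcal T$, using the invariant that between consecutive revisions the system evolves under a constant target set for which Algorithm~\ref{alg}, initialized at the current action profile, returns $\mathcal A_f=\mathcal V$. On $[1,T_1]$ the input is $\vect{\gamma_0}$ from Algorithm~\ref{alg:optimal_C}, which by construction satisfies $\phi(\mathcal C_0)=1$. Proposition~\ref{prop:convergence} then gives that $\vect x(t)$ is non-decreasing on this interval, and the invariant holds at $T_1$.

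\emph{Inductive step.} At a revision time $T_k$ there are two cases.

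If no node is removed, $\mathcal C$ is unchanged. By Proposition~\ref{prop:convergence} applied from the new initial condition, $\vect x$ stays monotone. Algorithm~\ref{alg}'s certificate is unaffected, because monotone evolution under the same constant input only enlarges $\mathcal A$, and the fixed-point iteration in Algorithm~\ref{alg} is monotone in its initial set.

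If a node $i$ is removed, Proposition~\ref{prop:delta} guarantees $\mathcal A(t)\supseteq\mathcal A(T_k)$ for all later $t$ under the reduced set. We then need that the dynamics from $(\vect x(T_k),\vect y(T_k))$ under the constant set $\mathcal C\setminus\{i\}$ is again monotone non-decreasing. This lets us invoke the argument behind Algorithm~\ref{alg}, as in~\cite[Theorem~3]{raineri2025_tcns}, to conclude convergence to $\vect x^*=\vect 1$, and the invariant is restored.

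I expect this to be the main obstacle. Proposition~\ref{prop:convergence} is stated from the initial consensus $-\vect 1$, where opinions start at their minimum. From a general $\vect y(T_k)$ the opinions need not increase monotonically. I would handle this by arguing, as in the proof of Proposition~\ref{prop:delta}, that the relevant quantities $\delta_j$ are bounded below by those computed from the worst-case opinions $\vect{\tilde y}$. I would then show that Algorithm~\ref{alg}'s candidate equilibrium opinions lower-bound the limiting opinions, so that every node added by Algorithm~\ref{alg} eventually switches and never switches back.

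\emph{Convergence and $\phi=1$.} Since $\mathcal C(t)$ is non-increasing and finite, it changes only finitely often, so after the last change it is constant. By the invariant, Algorithm~\ref{alg} certifies $\mathcal A_f=\mathcal V$ for that set. Combined with Assumption~\ref{a:activation}, which rules out every activation sequence that could stall, this gives $\vect x(t)=\vect 1$ in finite time for all activation sequences, hence $\phi(\Gamma)=1$.

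\emph{Effort comparison.} Because $\gamma_i(t)$ is pointwise non-increasing and starts at $\vect{\gamma_0}$, we have $\vect\gamma(t)^\top\vect 1\le\vect{\gamma_0}^\top\vect 1$ for all $t$. Integrating over $[0,T]$ bounds the effort in \eqref{eq:cost} by that of the constant control.

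\emph{Eventually zero input.} Once $\vect x=\vect 1$, the opinions converge asymptotically to $\vect 1$ (by Proposition~\ref{prop:convergence}), so $\min_k y_k\to 1$. For any targeted $i$, the quantity $\vect{\tilde y}$ computed with $\tilde\gamma_i=0$ is then close to $\vect 1$, and so all $\delta_j$ are close to $2\beta_j(1-\lambda_j)+(1-\beta_j)>0$. Moreover, Algorithm~\ref{alg} started from $\vect x=\vect 1$ trivially returns $\mathcal A_f=\mathcal V$. Hence at every sufficiently late revision time (which exist, since $\mathcal T$ is unbounded) some node is removed. After at most $|\mathcal C_0|$ such revisions $\mathcal C=\emptyset$, which gives $\bar T$.
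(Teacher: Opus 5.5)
Your proposal is correct and follows the same route as the paper's proof: non-decreasing actions from Proposition~\ref{prop:delta} combined with the certificate of Algorithm~\ref{alg} for $\phi=1$, pointwise non-increase of $\vect\gamma(t)$ for the effort bound, and removability of every targeted node once $\min_k y_k$ is large (you count $|\mathcal C_0|$ late revisions directly, where the paper argues by contradiction). The paper's proof simply takes for granted the re-initialization issue you flag, and your fix goes through: induct over the iterations of Algorithm~\ref{alg}, with Proposition~\ref{prop:delta} as the base case, and use $\liminf_t \vect y(t)\geq\hat{\vect y}$ together with the strict inequality $\delta_i(\hat{\vect x},\hat{\vect y})>0$ to get permanent switches. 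The same caveat applies in two other places, where Proposition~\ref{prop:convergence} should not be re-invoked: in your no-removal case, carry over the guarantee from the last removal time instead; and for $\vect y\to\vect 1$ at the end, use the contraction $1-y_j(t+1)=(1-\lambda_j)(1-\gamma_j(t))\sum_k w_{jk}(1-y_k(t))$.
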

\begin{proof}
     The proof is a consequence of the monotonicity of $\vect{x}(t)$ under the closed-loop control law in Algorithm~\ref{alg:final} implied by Proposition~\ref{prop:delta} and by the fact that Algorithm~\ref{alg} guarantees that $\phi(\vect\gamma)=1$. The fact that the control input is monotonically non-increasing, implies that \eqref{eq:cost} is not increased. Finally, we prove that there exists $\hat t$ s.t. $\vect\gamma(t)=\vect{0}$ for all $t\geq \hat t$ by contradiction. Assume the statement is false. Hence there exists $T_k$ such that $\vect\gamma(t)=\vect{\bar\gamma}\neq 0$ for any $t\geq T_k$. By design, there exists a finite time $\bar t$ such that $\vect{x}(t)=\vect 1$  for any $t\geq\bar t$ and $\vect{y}\to\vect{1}$ asymptotically. This implies that there exists $\hat t$ such that $\min_{i\in\mathcal V}y_i(t)> 0$ for all $t\geq \hat t$. This implies that $\tilde y_j(t)> 0$, for all $t\geq \hat t$ and consequently $\delta_j(\vect{x}(t),\vect{\tilde y}(t))\geq \delta_j(\vect{1},\vect{0})>0$ for all $t\geq \hat t$. Hence, at first revision time $T_s>\hat t$, any node $i$ with $\gamma_i=+1$ satisfies the conditions in Proposition~\ref{prop:delta}, and thus, due to Algorithm~\ref{alg:final}, $\vect\gamma(t)\neq\vect{\bar\gamma}$ for $t> T_s$, which contradicts our assumption.
\end{proof}

\section{Simulations}\label{sec:simulations}


We perform simulations in a setting in which the update dynamics is stochastic. Given a simulation time-horizon $T_{\rm s}$, we generate an activation matrix $\mat R\in\{0,1\}^{n\times T_{\rm s}}$, where the $t$-th column corresponds to the activation set $\mathcal{R}(t)$, i.e.,  $R_{it}=1\iff i\in\mathcal R(t)$. Each entry of $\mat R$ is a realization of an independent Bernoulli random variable with probability equal to $p=0.4$.  We check if $\mat R\vect{1}\geq \vect{1}$, which guarantees Assumption \ref{a:activation} holds. Otherwise, we re-generate the matrix.

We start by analyzing the algorithm performance in terms of total control effort and action convergence time, comparing the constant control input proposed in Section~\ref{sec:constant}, and the feedback control policy proposed in Section~\ref{sec:feedback} on a real-world network with social contacts in a village in rural Malawi~\cite{malawi_net}, consisting of $n=84$ individuals and 346 weighted undirected edges. For feedback control, the simulations are carried out with Algorithm~\ref{alg:final} using revision step size $\Delta T=2$, so that $\mathcal T=\{\Delta T,2\Delta T,\dots\}$.  
As shown in Fig.~\ref{fig:Malawi01_comparison_c}, the feedback control policy has an initial phase in which the controlled nodes coincide with the constant strategy. Then, 
the number of controlled nodes decreases over time, until eventually no input is used,  reducing the total control effort. 
The red and green dots in Fig.~\ref{fig:Malawi01_comparison_c} indicate the time $T^*$ at which all nodes’ actions reaches the desired value $+1$ using the constant and feedback control policies, respectively. Interestingly, these values seem to coincide, implying that our feedback control policies allows to reduce the total control effort, without slowing down convergence. 

It is important to note that the control policy cannot be stopped at $T^*$, as the opinion equilibrium has not been reached (turning off the input may result in nodes switching their action back to $-1$). In Fig.~\ref{fig:Malawi01_comparison_y}, the convergence of the opinion vectors is reported for both constant and feedback control, evaluated as $\log (\|\vect{y}(t)-\vect{1}\|_\infty)$. While the constant control policy attains slightly faster convergence, the improvement is marginal and only when the opinions are already extremely close to the equilibrium $\vect{y^*}=\vect{1}$. 

These findings are  supported by  simulations  on larger synthetic networks with $n=200$ nodes; see Table \ref{tab:table_cost}. We compare the the action convergence time $T^*$ and the overall control cost $J$ up to time $\hat{T}_\varepsilon:=\min\{t:\|\vect{y}(t)-\vect{1}\|\leq 10^{-4}\}$ (note that $\hat{T}_\varepsilon$ could differ in constant and feedback control scenarios). Our feedback control policy reduces the cost by on average $5$\% on Erdős–Rényi (ER) graphs, without increasing the convergence time. Such reduction increases as the network becomes more structured: $26$\%  on Watts-Strogatz  (WS)  networks and up to $50\%$ on the real-world  network. The code is available at \href{github.com/robertaraineri/control-coev-ecc26}{https://github.com/robertaraineri/control-coev-cdc2026}.

\begin{figure}
 \centering
\subfloat[Number of controlled nodes]{
    \includegraphics[width=0.45\linewidth]{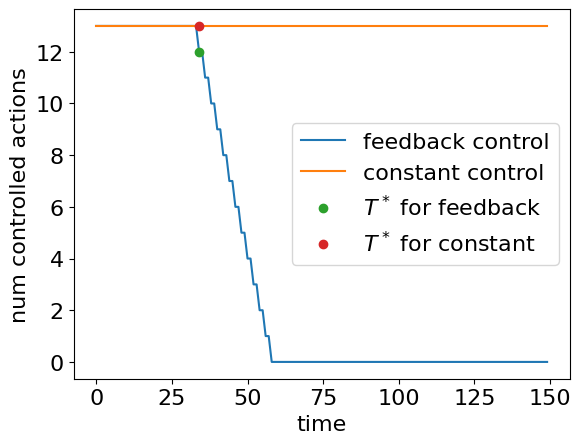}   \label{fig:Malawi01_comparison_c}}
\subfloat[Opinion convergence]{\includegraphics[width=0.48\linewidth]{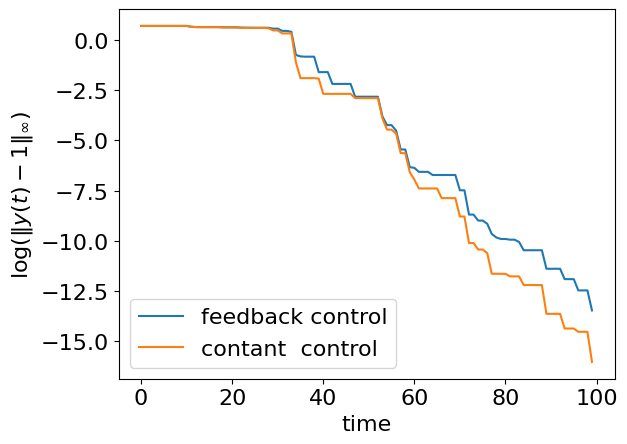}  \label{fig:Malawi01_comparison_y}}
   \caption{Feedback vs constant control policy: (a)  number of controlled and (b) distance from the desired opinion consensus; $(\lambda, \beta)=(0.4,0.6)$.}
    \label{fig:malawi_comparison}
\end{figure}

\begin{table}
    \centering
\begin{tabular}{>{\columncolor{gray!20}}c c c c c}
\hline
\rowcolor{gray!20}
& & \multicolumn{2}{c}{Feedback\,$\mid$\,\textit{Constant}} \\
\cline{3-4}
\rowcolor{gray!20}
Network & Parameters & Control Effort $J$ & $T^*$ \\
\hline
ER & $(\lambda, \beta)=(0.4,0.6)$ & $3314 \mid \it 3481$ & $28 \mid\it 28$ \\
ER & $(\lambda, \beta)=(0.3,0.7)$ & $3500 \mid\it 3640$ & $28\mid\it 28$\\
ER & $(\lambda, \beta)=(0.2,0.8)$ & $3400 \mid\it 3640$  & $28 \mid\it 28$\\
WS & $(\lambda, \beta)=(0.4,0.6)$ & $3718 \mid\it 3955$ & $75\mid\it 75$\\
WS & $(\lambda, \beta)=(0.3,0.7)$ & $4660 \mid\it 5236$ & $138\mid\it 138$\\
WS & $(\lambda, \beta)=(0.2,0.8)$ & $3456 \mid\it 4671$ & $100\mid\it 100$\\
\hline
\end{tabular}
    \caption{Comparison of performance in terms of the total control effort $J$ and the consensus time $T^*$. 
    Network parameters: ER connection probability $p=0.3$; WS average degree $k=40$ and rewiring probability  $p=0.4$. }
    \label{tab:table_cost}\vspace{-10pt}
\end{table}

\section{Conclusion}\label{sec:conclusion}

We 
focused on reaching a collective consensus by broadcasting to selected individuals, e.g. as done in realistic scenario via media sources or recommender agents. 
The main contribution of this work lies in extending the existing framework beyond the limitations of previous studies~\cite{raineri2025_tcns}, providing both a more general control-oriented formulation of the problem and a more effective strategy for intervention. 
These results open several  directions for future research. First, algorithms proposed for different domains suggest that sometimes it might be convenient to change add new target nodes or activate the input only at specific times~\cite{Zino2023fast}. Investigating these alternatives could further reduce the  control effort.
Second, the same framework could be adapted to pursue different objectives, e.g., preserving diversity.


\appendix



\begin{lemma}\label{lemma:y-monotonicity}
    Consider a controlled coevolutionary dynamics under Assumptions~\ref{a:activation}--\ref{a:constant}. For any $\tau > 0$, assume that $\vect{x}(t)\geq \vect{x}(t-1)$, $\forall\,t\leq \tau$. Then, $\vect{y}(t)\geq \vect{y}(t-1)$, $\forall\, t\leq \tau$. 
\end{lemma}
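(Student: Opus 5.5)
We argue by strong induction on $t$, for $t=1,\dots,\tau$, proving $y_i(t)\geq y_i(t-1)$ for every $i\in\mathcal V$. Under Assumptions~\ref{a:model} and~\ref{a:constant}, the update \eqref{y-dinamic} for an active node can be written as $y_i(t)=F_i(\vect y(t-1),x_i(t))$, where
\[
F_i(\vect y,\xi)=(1-\lambda_i)(1-\gamma_i)\sum\nolimits_j w_{ij}y_j+(1-\lambda_i)\gamma_i+\lambda_i\xi .
\]
Here $\gamma_i\in\{0,1\}$ and $\iota=1$ are constant, and $x_i(t)=s(\vect z(t-1))$ is the action chosen in the same joint best response. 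We use this form because the new opinion depends on the \emph{new} action. The map $F_i$ is nondecreasing in both arguments, since all its coefficients are nonnegative. An inactive node keeps its opinion, so the inequality holds trivially for it.

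\emph{Base case ($t=1$).} Since $\vect y(0)=-\vect 1$ is the componentwise minimum of $[-1,1]^n$, any updated $y_i(1)\in[-1,1]$ satisfies $y_i(1)\geq -1=y_i(0)$. The claim holds without using the hypothesis on $\vect x$.

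\emph{Inductive step.} Assume $\vect y(s)\geq\vect y(s-1)$ for all $s\leq t-1$, with $t\leq\tau$. Take $i\in\mathcal R(t-1)$. We need the opinion of $i$ from its \emph{last} update, so let $t'<t-1$ be the previous activation time of $i$, if any. Then $y_i(t-1)=F_i(\vect y(t'),x_i(t'+1))$, because the opinion stays constant between activations. Since $x_i$ also does not change between activations, $x_i(t'+1)=x_i(t-1)$.

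Chaining the induction hypothesis gives $\vect y(t-1)\geq\vect y(t')$. The lemma's hypothesis gives $x_i(t)\geq x_i(t-1)=x_i(t'+1)$. Monotonicity of $F_i$ then yields
\[
y_i(t)=F_i(\vect y(t-1),x_i(t))\geq F_i(\vect y(t'),x_i(t'+1))=y_i(t-1).
\]
If $i$ was never active before, then $y_i(t-1)=-1$ and the base-case argument applies.

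\emph{Main obstacle.} The delicate point is bookkeeping: matching $y_i(t-1)$ with the arguments of $i$'s last update, and checking that the action entering that update equals $x_i(t-1)$. The coupling through $s(\vect z)$ needs no separate treatment, because the lemma's assumption $\vect x(t)\geq\vect x(t-1)$ already supplies action monotonicity.

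The assumption that $\vect\gamma$ is constant is essential. It guarantees that the same map $F_i$ is used at times $t'$ and $t-1$. If a node stopped being targeted, $y_i$ could drop, which is exactly why Section~\ref{sec:feedback} requires extra care.
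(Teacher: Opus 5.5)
Your proof is correct, but it is organized differently from the paper's. The paper splits by target membership. For $i\in\mathcal C$ it notes that after an update $y_i=(1-\lambda_i)+\lambda_i x_i$ depends only on $i$'s own action, so monotonicity is inherited directly from $x_i$. For $i\notin\mathcal C$ it observes that the update has the same form as in the uncontrolled model and invokes \cite[Lemma~1]{raineri2024}.

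You instead treat every node through the single map $F_i$, which is monotone in $(\vect y,\xi)$ for any fixed $\gamma_i$. You then run one strong induction in time over all components at once. It keeps explicit track of each node's last activation and uses $\vect y(0)=-\vect 1$ explicitly for nodes not yet activated.

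The paper's route is shorter and makes transparent that targeted opinions are slaved to their own action. Yours is self-contained and handles the coupling between the two cases directly. A non-targeted node may have targeted neighbours, so reducing to the uncontrolled result implicitly also needs monotonicity of those neighbours' opinions. Your joint induction supplies exactly that, since it only uses that $\vect y$ is monotone up to time $t-1$, not which rule produced it. You also correctly pair the opinion update with the new action $x_i(t)$, whereas the paper writes $x_i(t-1)$ for targeted nodes; this index slip is harmless for monotonicity.

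One refinement of your closing remark: constancy of $\vect\gamma$ is used only to compare $F_i$ at the last two updates of $i$. Since $\iota=1\geq\sum_j w_{ij}y_j$, $F_i$ is also non-decreasing in $\gamma_i$. So your argument in fact covers non-decreasing targeting, and it is only the \emph{removal} of a target that can break opinion monotonicity.
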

\begin{proof}
We prove the inequality component-wisely. For $i\in\mathcal C$, it holds $y_i(t)=(1-\lambda_i)+\lambda_ix_i(t-1)$.  Consequently, if $\vect x(t)$ is an increasing function, then the same holds for $\vect y(t)$. For $i\notin \mathcal C$, their dynamics coincides with the one from~\cite{raineri2024}. Thus, the monotonicity directly follows~\cite[Lemma 1]{raineri2024}.  
\end{proof}


\begin{lemma}\label{lemma:x-monotonicity}
    Consider a controlled coevolutionary dynamics under Assumptions~\ref{a:activation}--\ref{a:constant}. Then, $\vect{x}(t+1)\geq \vect{x}(t)$, $\forall\,t\geq 0$. 
\end{lemma}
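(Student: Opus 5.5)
We argue by strong induction on $t$, proving jointly that $\vect x(s+1)\geq\vect x(s)$ and $\vect y(s+1)\geq\vect y(s)$ for all $s\leq t$; the second inequality comes from the first via Lemma~\ref{lemma:y-monotonicity}. The key tool is that $\delta_i(\vect z)$ in Proposition~\ref{prop:dynamics} is nondecreasing in every entry of $\vect x$ and $\vect y$. This holds because, with $\iota=1$ and constant $\vect\gamma$ (Assumptions~\ref{a:model}--\ref{a:constant}), $\delta_i$ is affine in $(\vect x,\vect y)$ with nonnegative coefficients $2\beta_i(1-\lambda_i)(1-\gamma_i)w_{ij}$ and $(1-\beta_i)a_{ij}$. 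Likewise, the opinion map in \eqref{y-dinamic} is nondecreasing in $\vect y$ and in the chosen action $s$.

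\emph{Base step ($t=0$).} Since $\vect x(0)=-\vect 1$, every action can only stay equal or increase, so $x_i(1)\geq x_i(0)$ trivially. Consequently every $y_i(1)$ is either unchanged or given by \eqref{y-dinamic}. In the latter case $y_i(1)\geq (1-\lambda_i)(1-\gamma_i)(-1)+(1-\lambda_i)\gamma_i-\lambda_i\geq -1=y_i(0)$. Hence $\vect y(1)\geq\vect y(0)$ as well.

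\emph{Inductive step.} Assume $\vect x(s)\geq\vect x(s-1)$ and $\vect y(s)\geq\vect y(s-1)$ for all $s\leq t$. Suppose, for contradiction, that some $i\in\mathcal R(t)$ has $x_i(t)=+1$ and $x_i(t+1)=-1$, which requires $\delta_i(\vect z(t))<0$. Let $t'<t$ be the last time at which $i$ was updated and set $x_i(t'+1)=+1$; such a $t'$ exists because $x_i(0)=-1$. At that update, either $\delta_i(\vect z(t'))>0$, or $\delta_i(\vect z(t'))=0$ together with $x_i(t')=+1$. In both cases $\delta_i(\vect z(t'))\geq 0$. Applying the induction hypothesis repeatedly gives $\vect z(t)\geq\vect z(t')$ componentwise, and monotonicity of $\delta_i$ then yields $\delta_i(\vect z(t))\geq\delta_i(\vect z(t'))\geq0$. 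This contradicts $\delta_i(\vect z(t))<0$, so $\vect x(t+1)\geq\vect x(t)$. Lemma~\ref{lemma:y-monotonicity} with $\tau=t+1$ then closes the induction on $\vect y$.

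\emph{Main obstacle.} The delicate step is the equality case $\delta_i=0$. If $\delta_i(\vect z(t))=0$, the tie-breaker keeps $x_i(t+1)=x_i(t)=+1$, so no violation arises; I will state this explicitly. A second subtlety is that Lemma~\ref{lemma:y-monotonicity} must be applied only up to the current time, so the induction has to interleave the $\vect x$ and $\vect y$ claims carefully rather than invoke the lemma globally. Should that lemma's hypothesis be too coarse, I would instead prove $y_i(t+1)\geq y_i(t)$ directly. For a non-updating node this is immediate. For an updating node, one compares its current update with the previous one of the same node via \eqref{y-dinamic}, using monotonicity of the map in $\vect y(t)\geq\vect y(t')$ and $x_i(t+1)\geq x_i(t'+1)$.
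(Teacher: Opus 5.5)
Your proposal is correct and follows essentially the same argument as the paper. The paper argues by contradiction at the first time $\hat t$ at which some action decreases. It compares $\delta_i$ at $\hat t-1$ with an earlier update time $\tilde t$ where $\delta_i(\tilde t)\geq 0$, and rules this out using Lemma~\ref{lemma:y-monotonicity} together with the nonnegative coefficients of $\delta_i$ in $(\vect x,\vect y)$; this is exactly your strong induction, phrased as a minimal counterexample. Your explicit handling of the tie-breaker case and of the interleaving of the $\vect x$ and $\vect y$ claims is slightly more careful than the paper's write-up, but not different in substance.
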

\begin{proof}
We proceed by contradiction. Assume that $\vect{x}(t)$ is not monotonically non-decreasing. Then, there exists a finite time $\hat t := \inf \{t : \exists\,i \in \mathcal V \; \text{such that} \; x_i(t)<x_i(t-1) \}$. 
 By definition, at time $\hat t$ there is necessarily an $i\in\mathcal V$ for which $x_i(\hat t-1)=+1$ and $x_i(\hat t)=-1$, i.e., $\mathcal{A}(\hat t) \subset \mathcal{A}(\hat{t}-1)$. Proposition~\ref{prop:dynamics} indicates that $\delta_i(\hat t-1)<0$. Moreover, since $x_i(\hat t-1)=+1$, there must be a time  $\tilde t\in\{\hat t-1-T,\dots,\hat t-2\}$ in which $\delta_i(\tilde t)\geq 0$.  Hence, $\delta_i(\hat t-1)<\delta_i(\tilde t)$. Let us use the definition of $\delta_i(t)$ to rewrite the condition $\delta_i(\hat t-1)<\delta_i(\tilde t)$ and prove that this inequality is impossible. In fact, we get
\begin{equation}
\label{contrad_ineq}
\begin{array}{lc}
    & 2 \beta_i (1-\lambda_i)\left[(1-\gamma_i)\hspace{-.1cm}\sum\nolimits_{j \in \mathcal V} {w}_{ij} y_j(\tilde t)+\gamma_i\right] \hspace{-.05cm}+ \\
    &\hspace{-.05cm}(1- 
 \beta_i)\hspace{-.1cm}\sum\nolimits_{j \in \mathcal V}\! a_{ij} x_j(\tilde t )>\\
 &2\beta_i (1-\lambda_i)\left[ (1-\gamma_i)\sum_{j \in \mathcal V} {w}_{ij} y_j(\hat t -1)+\gamma_i \right]+\\
 & (1- 
 \beta_i) \sum_{j \in \mathcal V} a_{ij} x_j(\hat t -1).
 \end{array}
\end{equation}
Given that \eqref{contrad_ineq} involves only the values of $\vect{x}$ and $\vect{y}$ up to time $\hat t-1$ and $\hat t$ is the first time instant in which $\vect{x}(t)$ has decreased, then  for any $t\leq \hat t-1$ the action vector $\vect{x}(t)$ is monotonically non-decreasing. Hence by Lemma~\ref{lemma:y-monotonicity}, $\vect{y}(\tilde t)\leq \vect{y}(\hat t-1)$, and so $\sum_{j \in \mathcal V} {w}_{ij} y_j(\tilde t)\leq \sum_{j \in \mathcal V} {w}_{ij} y_j(\hat t -1)$. Thus, the first term on the LHS of \eqref{contrad_ineq} cannot exceed the corresponding RHS term. Hence, \eqref{contrad_ineq} may hold only if $\sum_{j \in \mathcal V} a_{ij} x_j(\tilde t) >\sum_{j \in \mathcal V} a_{ij} x_j(\hat t -1)$, which is impossible by the definition of $\hat t$. Hence, \eqref{contrad_ineq} cannot hold and so $\hat t$ cannot exist, yielding the contradiction and thus the claim.
\end{proof}

By combining Lemmas~\ref{lemma:y-monotonicity} and~\ref{lemma:x-monotonicity}, we conclude that $\vect{z}(t)=(\vect{x}(t),\vect y(t))$ is a monotonically non-decreasing function of $t$. Moreover, both  $\vect{x}(t)$ and $\vect y(t)$ are bounded by Proposition~\ref{prop:dynamics}. Hence, by the monotone convergence theorem~\cite{bartle1976}, both sequences $\vect y(t)$ and $\vect x(t)$ admit a limit. Being $\vect x(t)$ a discrete (and finite) sequence, its convergence should necessary occur in finite time, while convergence of $\vect y(t)$ is asymptotic. \qed



\begin{thebibliography}{21}
\providecommand{\url}[1]{#1}
\csname url@samestyle\endcsname
\providecommand{\newblock}{\relax}
\providecommand{\bibinfo}[2]{#2}
\providecommand{\BIBentrySTDinterwordspacing}{\spaceskip=0pt\relax}
\providecommand{\BIBentryALTinterwordstretchfactor}{4}
\providecommand{\BIBentryALTinterwordspacing}{\spaceskip=\fontdimen2\font plus
\BIBentryALTinterwordstretchfactor\fontdimen3\font minus \fontdimen4\font\relax}
\providecommand{\BIBforeignlanguage}[2]{{%
\expandafter\ifx\csname l@#1\endcsname\relax
\typeout{** WARNING: IEEEtran.bst: No hyphenation pattern has been}%
\typeout{** loaded for the language `#1'. Using the pattern for}%
\typeout{** the default language instead.}%
\else
\language=\csname l@#1\endcsname
\fi
#2}}
\providecommand{\BIBdecl}{\relax}
\BIBdecl

\bibitem{centola2005emperor}
D.~Centola, R.~Willer, and M.~Macy, ``The emperor’s dilemma: A computational model of self-enforcing norms,'' \emph{Am. J. Sociol.}, vol. 110, no.~4, pp. 1009--1040, 2005.

\bibitem{friedkin2015_socialsurvey}
N.~E. Friedkin, ``{The Problem of Social Control and Coordination of Complex Systems in Sociology: A Look at the Community Cleavage Problem},'' \emph{IEEE Control Syst. Mag.}, vol.~35, no.~3, pp. 40--51, 2015.

\bibitem{proskurnikov2017tutorial}
A.~V. Proskurnikov and R.~Tempo, ``{A tutorial on modeling and analysis of dynamic social networks. Part I},'' \emph{Annu. Rev. Control}, vol.~43, pp. 65--79, 2017.

\bibitem{montanari2010spread_innovation}
A.~Montanari and A.~Saberi, ``The spread of innovations in social networks,'' \emph{Proc. Natl. Acad. Sci. USA}, vol. 107, no.~47, pp. 20\,196--20\,201, 2010.

\bibitem{gavrilets2017collective}
S.~Gavrilets and P.~J. Richerson, ``Collective action and the evolution of social norm internalization,'' \emph{Proc. Natl. Acad. Sci. USA}, vol. 114, no.~23, pp. 6068--6073, 2017.

\bibitem{Martins2008coda}
A.~C.~R. Martins, ``Continuous opinions and discrete actions in opinion dynamics problems,'' \emph{Int. J. Mod. Phys. C}, vol.~19, no.~04, pp. 617--624, 2008.

\bibitem{Ceragioli2018quantized}
F.~Ceragioli and P.~Frasca, ``Consensus and disagreement: The role of quantized behaviors in opinion dynamics,'' \emph{SIAM J. Control Optim.}, vol.~56, no.~2, pp. 1058--1080, 2018.

\bibitem{zino2020chaos}
L.~Zino, M.~Ye, and M.~Cao, ``A two-layer model for coevolving opinion dynamics and collective decision-making in complex social systems,'' \emph{Chaos}, vol.~30, no.~8, p. 083107, 2020.

\bibitem{Zino2020cdc}
------, ``A coevolutionary model for actions and opinions in social networks,'' in \emph{59th IEEE Conf. Decis. Control}, 2020, pp. 1110--1115.

\bibitem{Hassan2023tac}
H.~D. Aghbolagh, M.~Ye, L.~Zino, Z.~Chen, and M.~Cao, ``Coevolutionary dynamics of actions and opinions in social networks,'' \emph{IEEE Trans. Autom. Control.}, vol.~68, no.~12, pp. 7708--7723, 2023.

\bibitem{raineri2025_tcns}
R.~Raineri, M.~Ye, and L.~Zino, ``Controlling a social network of individuals with coevolving actions and opinions,'' \emph{IEEE Trans. Control Netw. Syst.}, 2026.

\bibitem{Li2020}
T.~Li and H.~Zhu, ``Effect of the media on the opinion dynamics in online social networks,'' \emph{Physica A}, vol. 551, p. 124117, Aug. 2020.

\bibitem{Rossi2022}
W.~S. Rossi, J.~W. Polderman, and P.~Frasca, ``The closed loop between opinion formation and personalized recommendations,'' \emph{IEEE Trans. Control Netw. Syst.}, vol.~9, no.~3, pp. 1092--1103, 2022.

\bibitem{Sprenger2024}
B.~Sprenger, G.~De~Pasquale, R.~Soloperto, J.~Lygeros, and F.~Dörfler, ``Control strategies for recommendation systems in social networks,'' \emph{IEEE Control Syst. Lett.}, vol.~8, pp. 634--639, 2024.

\bibitem{ye2021nat}
M.~Ye \emph{et~al.}, ``Collective patterns of social diffusion are shaped by individual inertia and trend-seeking,'' \emph{Nat. Comm.}, vol.~12, p. 5698, 2021.

\bibitem{friedkin1990_FJsocialmodel}
N.~E. Friedkin and E.~C. Johnsen, ``{Social Influence and Opinions},'' \emph{J. Math. Sociol.}, vol.~15, no. 3-4, pp. 193--206, 1990.

\bibitem{raineri2024}
R.~Raineri, G.~Como, F.~Fagnani, M.~Ye, and L.~Zino, ``On controlling a coevolutionary model of actions and opinions,'' \emph{63rd IEEE Conf. Decis. Control}, pp. 4550--4555, 2024.

\bibitem{Como2022supermodular}
G.~Como, S.~Durand, and F.~Fagnani, ``Optimal targeting in super-modular games,'' \emph{IEEE Trans. Autom. Control.}, vol.~67, no.~12, pp. 6366--6380, 2022.

\bibitem{malawi_net}
L.~Ozella \emph{et~al.}, ``Using wearable proximity sensors to characterize social contact patterns in a village of rural {Malawi},'' \emph{EPJ Data Sci.}, vol.~10, no.~1, 2021.

\bibitem{Zino2023fast}
L.~Zino, G.~Como, and F.~Fagnani, ``Fast spread in controlled evolutionary dynamics,'' \emph{IEEE Trans. Control Netw. Syst.}, vol.~10, no.~3, pp. 1555--1567, 2023.

\bibitem{bartle1976}
R.~G. Bartle, \emph{{The Elements of Real Analysis}}.\hskip 1em plus 0.5em minus 0.4em\relax Wiley, 1976.

\end{thebibliography}
\end{document}